\definecolor{darkred}{rgb}{0.4,0.1,0.1}
\definecolor{darkblue}{rgb}{0.1,0.1,0.4}
\theoremstyle{plain}% default
\newtheorem{thm}{Theorem}[section]
\newtheorem*{thm*}{Theorem}
\newtheorem{prop}[thm]{Proposition}
\newtheorem*{cor*}{Corollary}
\theoremstyle{remark}
\newtheorem{remark}[thm]{Remark}
\theoremstyle{plain}
\newcommand{\be}{\begin{equation}}
\newcommand{\ee}{\end{equation}}
\newcommand{\beu}{\begin{equation*}}
\newcommand{\eeu}{\end{equation*}}
\newcommand{\besu}{\begin{equation*}
\begin{aligned}}
\newcommand{\eesu}{\end{aligned}
\end{equation*}}
\newcommand{\bes}{\begin{equation}
\begin{aligned}}
\newcommand{\ees}{\end{aligned}
\end{equation}}
\newcommand\fra{\mathfrak a}
\newcommand\sess{\sigma_{\rm ess}}
\newcommand\void[1]{}
\def\sess{\sigma_{\rm ess}}
\def\frb{{\mathfrak b}}
      \def\dC{{\mathbb C}}
      \def\dR{{\mathbb R}}
\newcommand{\dom}{\mathrm{dom}\,}
\numberwithin{equation}{section}
\title[$\delta$-interactions on angles and crossing lines]{Note on 2D Schr\"odinger operators with $\delta$-interactions on angles and crossing lines}
\author{Vladimir Lotoreichik}
\address{Technische Universit\"{a}t Graz,
Institut f\"{u}r Numerische Mathematik\\
Steyrergasse 30,
8010 Graz, Austria}
\email{lotoreichik@math.tugraz.at}
\begin{document}

\maketitle

%\today

\begin{abstract}
In this note we sharpen the lower bound from \cite{LLP10} on the spectrum of the 2D Schr\"odinger operator with a $\delta$-interaction supported on a planar angle.
Using the same method we obtain the lower bound on the spectrum of the 2D Schr\"odinger operator with a $\delta$-interaction supported on crossing straight lines. The latter operators
arise in the three-body quantum problem with $\delta$-interactions between particles.
\end{abstract}

\section{Introduction}

Self-adjoint Schr\"odinger operators with $\delta$-interactions supported on sufficiently regular hypersurfaces can be defined via closed, densely defined, symmetric and lower-semibounded quadratic forms using the first representation theorem, see \cite{BEKS94} and also \cite{BLL13}. 

\subsection*{$\delta$-interactions on angles}
In our first model the support of the $\delta$-interaction is the set $\Sigma_\varphi\subset\dR^2$, which consists of two rays meeting at the common origin  and constituting the angle $\varphi\in(0,\pi]$ as in Figure~\ref{fig}.
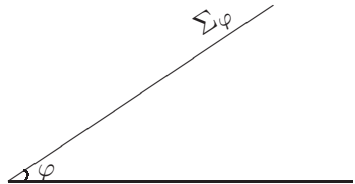
\begin{figure}[H]
\begin{center}
\begin{picture}(150,80)
\put(0,10){\line(1,0){130}}
\put(0,10){\line(3,2){100}}
\put(69,65){\begin{turn}{33.3}$\Sigma_\varphi$\end{turn}}
\qbezier(6,14)(8,13)(6,10)
\put(11,12){\small{$ \varphi$}}
\end{picture}
\end{center}
\caption{The angle $\Sigma_\varphi$ of degree $\varphi\in(0,\pi]$.}
\label{fig}
\end{figure}
\noindent The quadratic form in $L^2(\dR^2)$
\begin{equation}
\label{qform}
\fra_{\varphi}[f] := \|\nabla f\|_{L^2(\dR^2;\dC^2)}^2 - \alpha \|f|_{\Sigma_\varphi}\|_{L^2(\Sigma_\varphi)}^2,\qquad \dom\fra_{\varphi} := H^1(\dR^2),
\end{equation}
is closed, densely defined, symmetric and lower-semibounded, where $f|_{\Sigma_\varphi}$ is the trace of $f$ on $\Sigma_\varphi$, and the constant $\alpha > 0$ is
called the strength of interaction.
The corresponding self-adjoint operator in $L^2(\dR^2)$
we denote by $A_{\varphi}$. Known spectral properties of this operator
include explicit representation of the essential spectrum $\sess(A_\varphi) = [-\alpha^2/4,+\infty)$ and some information on the discrete spectrum: $\sharp\sigma_{\rm d}(A_\varphi) \ge 1$ if and only if $\varphi \ne \pi$.
These two statements can be deduced from more general results by Exner and Ichinose \cite{EI01}. They are complemented by Exner and Nem\v{c}ov\'{a} in \cite{EN03} with 
the limiting property $\sharp\sigma_{\rm d}(A_\varphi) \rightarrow +\infty$ as $\varphi \rightarrow 0+$. 

In \cite{LLP10} the author obtained jointly with Igor Lobanov and Igor Yu. Popov 
a general result, which implies the lower bound on the spectrum of $A_\varphi$
\begin{equation}
\label{LLPbound}
\inf\sigma(A_\varphi) \ge -\frac{\alpha^2}{4\sin^2(\varphi/2)}.
\end{equation}
This bound is close to optimal for $\varphi$ close to $\pi$, whereas in the limit
$\varphi\rightarrow 0+$ the bound tends to $-\infty$. In the present note we sharpen this bound. Namely, we obtain
\begin{equation}
\label{bound1}
\inf\sigma(A_\varphi) \ge -\frac{\alpha^2}{(1+\sin(\varphi/2))^2}.
\end{equation}
The new bound yields that the operators $A_\varphi$ are uniformly
lower-semibounded with respect to $\varphi$ and
\[
\inf\sigma(A_\varphi)\ge -\alpha^2
\]
holds for all $\varphi \in(0,\pi]$.
This observation agrees well with physical expectations.
Note that separation of variables yields that $\inf\sigma(A_\pi) = -\alpha^2/4$
and in this case the lower bound in \eqref{bound1} coincides  with the exact
spectral bottom.

For sufficiently sharp angles upper bounds on $\inf\sigma(A_\varphi)$ were obtained by Brown, Eastham and Wood in \cite{BEW08}.
See also Open Problem 7.3 in \cite{E08} related to the discrete spectrum
of $A_\varphi$ for $\varphi$ close to $\pi$.

\subsection*{$\delta$-interactions on crossing straight lines}
We also consider an analogous model with the $\delta$-interaction supported on the set $\Gamma_\varphi = \Gamma_1\cup \Gamma_2$,  where $\Gamma_1$
and $\Gamma_2$ are two straight lines, which cross at the angle $\varphi\in(0,\pi)$ as in Figure~\ref{fig2}.
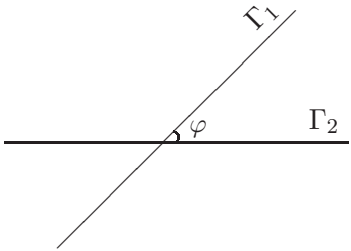
\begin{figure}[H]
\begin{center}
\begin{picture}(150,100)
\put(0,50){\line(1,0){130}}
\put(20,10){\line(1,1){90}}
\put(90,92){\begin{turn}{45.0}$\Gamma_1$\end{turn}}
\put(115,55){$\Gamma_2$}
\qbezier(64,54)(67,53)(65,50)
\put(70,54){\small{$ \varphi$}}
\end{picture}
\end{center}
\caption{The straight lines $\Gamma_1$ and $\Gamma_2$ crossing at the angle of degree $\varphi\in(0,\pi)$.}
\label{fig2}
\end{figure}
\noindent The corresponding self-adjoint operator $B_{\varphi}$ in $L^2(\dR^2)$ can be defined via the closed, densely defined, symmetric and lower-semibounded  quadratic form 
\begin{equation}
\label{qform2}
\frb_{\varphi}[f] := \|\nabla f\|_{L^2(\dR^2;\dC^2)}^2 - \alpha \|f|_{\Gamma_\varphi}\|_{L^2(\Gamma_\varphi)}^2,\qquad \dom\frb_\varphi := H^1(\dR^2),
\end{equation}
in $L^2(\dR^2)$, where $\alpha > 0$ is the strength of interaction.
According to \cite{EN03} it is known that $\sigma_{\rm ess}(B_\varphi) = [-\alpha^2/4,+\infty)$
and that $\sharp\sigma_{\rm d}(B_\varphi) \ge 1$.

In this note we obtain the lower bound 
\begin{equation}
\label{bound2}
\inf\sigma(B_\varphi) \ge -\frac{\alpha^2}{1+\sin\varphi},
\end{equation}
using the same method as for the operator $A_\varphi$.
Separation of variables yields 
$\inf\sigma(B_{\pi/2}) = -\alpha^2/2$, and in this case the lower bound in the  estimate \eqref{bound2}  coincides with the exact spectral bottom.

Upper bounds on $\inf\sigma(B_\varphi)$ were obtained in \cite{BEW08, BEW09}. 
The operators of the type $B_\varphi$ arise in the one-dimensional quantum three-body problem after excluding the center of mass, see Cornean, Duclos and Ricaud \cite{CDR06, CDR08} and the references therein. 

We want to stress that our proofs are of elementary nature
and we do not use any reduction to integral operators acting on interaction supports $\Sigma_\varphi$ and $\Gamma_\varphi$.

\section{Sobolev spaces on wedges}
In this section $\Omega\subset\dR^2$ is a wedge with the angle
of degree $\varphi\in (0,2\pi)$. 
The Sobolev space $H^1(\Omega)$
is defined as usual, see \cite[Chapter 3]{McLean}. 
For any $f \in H^1(\Omega)$ the trace $f|_{\partial\Omega}\in L^2(\partial\Omega)$ is well-defined as in \cite[Chapter 3]{McLean} and \cite{M87}.
\begin{prop}\cite[Lemma 2.6]{LP08}
\label{wedge1}
Let $\Omega$ be a wedge with angle of degree $\varphi \in(0,\pi]$.
Then for any $f\in H^1(\Omega)$ the estimate
\[
\|\nabla f\|^2_{L^2(\Omega;\dC^2)} - \gamma\|f|_{\partial\Omega}\|^2_{L^2(\partial\Omega)} \ge -  \frac{\gamma^2}{\sin^2(\varphi/2)}\|f\|_{L^2(\Omega)}^2
\]
holds for all $\gamma >0$.
\end{prop}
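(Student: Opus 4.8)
The plan is to reduce the estimate to a one–dimensional computation by slicing the wedge with lines parallel to its bisector. After a rotation we may assume that, in polar coordinates, $\Omega=\{(r,\theta):r>0,\ |\theta|<\varphi/2\}$, so that $\partial\Omega=\Gamma_+\cup\Gamma_-$ with $\Gamma_\pm=\{\theta=\pm\varphi/2\}$ and the positive $x$-axis is the bisector. Since a wedge is a Lipschitz domain, and hence an $H^1$-extension domain, the restrictions to $\Omega$ of functions in $C_c^\infty(\dR^2)$ are dense in $H^1(\Omega)$, the trace map $H^1(\Omega)\to L^2(\partial\Omega)$ is continuous, and all three terms of the claimed inequality are continuous quadratic forms on $H^1(\Omega)$; it therefore suffices to prove the bound for such $f$, and, splitting into real and imaginary parts, we may take $f$ real-valued.

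The geometric observation is that for $y_0>0$ the horizontal line $\{y=y_0\}$ meets $\Omega$ in exactly the half-line $\{x>y_0\cot(\varphi/2)\}$ and leaves $\Omega$ through $\Gamma_+$ at the single point $p(y_0)=(y_0\cot(\varphi/2),y_0)$; as $y_0$ ranges over $(0,\infty)$ these half-lines foliate $\Omega^+:=\Omega\cap\{y>0\}$, and symmetrically for $y_0<0$ through $\Gamma_-$. (This is where the hypothesis $\varphi\le\pi$ is used: for $\varphi>\pi$ the slice would no longer be a single half-line with a finite endpoint on $\partial\Omega$.) Writing $a=y_0\cot(\varphi/2)$ and $g(x)=f(x,y_0)$, the elementary estimate
\[
\int_a^\infty |g'(x)|^2\,dx-\lambda\,|g(a)|^2\ \ge\ -\lambda^2\int_a^\infty|g(x)|^2\,dx,\qquad \lambda>0,
\]
obtained by expanding $0\le\int_a^\infty|g'(x)+\lambda g(x)|^2\,dx$ and using the fundamental theorem of calculus together with $g(x)\to0$ as $x\to\infty$, applies to $g=f(\cdot,y_0)$ for a.e.\ $y_0>0$. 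Integrating over $y_0\in(0,\infty)$ and applying Fubini yields
\[
\|\partial_x f\|_{L^2(\Omega^+)}^2-\lambda\int_0^\infty|f(p(y_0))|^2\,dy_0\ \ge\ -\lambda^2\,\|f\|_{L^2(\Omega^+)}^2 ,
\]
and an analogous inequality holds over $\Omega^-:=\Omega\cap\{y<0\}$ with $\Gamma_-$ in place of $\Gamma_+$.

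It remains to identify the boundary terms. Parametrising $\Gamma_+$ by $y_0\mapsto p(y_0)$ one has $|p'(y_0)|=1/\sin(\varphi/2)$, whence $\int_0^\infty|f(p(y_0))|^2\,dy_0=\sin(\varphi/2)\,\|f|_{\Gamma_+}\|_{L^2(\Gamma_+)}^2$, and likewise on $\Gamma_-$. Adding the two slice estimates, using $\partial\Omega=\Gamma_+\cup\Gamma_-$ up to the vertex (a null set for arc length) and $\|\partial_x f\|^2\le\|\nabla f\|^2$, we obtain
\[
\|\nabla f\|_{L^2(\Omega;\dC^2)}^2-\lambda\sin(\varphi/2)\,\|f|_{\partial\Omega}\|_{L^2(\partial\Omega)}^2\ \ge\ -\lambda^2\,\|f\|_{L^2(\Omega)}^2 ,
\]
and the choice $\lambda=\gamma/\sin(\varphi/2)$ turns this into exactly the assertion of the proposition.

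I do not expect any genuinely hard step here: the heart of the matter is the one-line completion of squares in the slice estimate, and the only points requiring care are the functional-analytic preliminaries (density of smooth compactly supported functions, the trace theorem on the Lipschitz wedge, and the Fubini/slicing identification of the boundary norms) and the decay of $f$ at infinity, which the compact-support reduction renders harmless. The argument can equivalently be packaged without slicing: applying the divergence theorem to $|f|^2F$ with the \emph{constant} field $F=(-1,0)$ — for which $\div F=0$, $|F|\equiv1$, and $F\cdot\nu=\sin(\varphi/2)$ on both $\Gamma_\pm$ — gives $\sin(\varphi/2)\,\|f|_{\partial\Omega}\|^2=2\int_\Omega f\,\nabla f\cdot F$, and Young's inequality applied to the right-hand side reproduces the bound above.
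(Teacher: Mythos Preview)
Your argument is correct. The paper does not prove this proposition at all: it merely quotes it from \cite[Lemma~2.6]{LP08} and remarks that Propositions~\ref{wedge1} and~\ref{wedge2} are ``variational equivalents of spectral results'' there, so there is nothing in the paper to compare against beyond the citation. Your slicing-by-horizontal-lines reduction to the one-dimensional completion-of-squares inequality is exactly the standard route (and is essentially how the Levitin--Parnovski bound is obtained), and the bookkeeping---the arc-length factor $\sin(\varphi/2)$, the choice $\lambda=\gamma/\sin(\varphi/2)$, and the inequality $\|\partial_x f\|^2\le\|\nabla f\|^2$---is all handled correctly. The density/trace preliminaries and the compact-support reduction are the right way to make the slice estimate and the vanishing at infinity rigorous, and your alternative packaging via the divergence theorem with the constant field $F=(-1,0)$ is a clean equivalent.
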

\begin{prop}\cite[Lemma 2.8]{LP08}
\label{wedge2}
Let $\Omega$ be a wedge with angle of degree $\varphi \in(\pi,2\pi)$.
Then for any $f\in H^1(\Omega)$ the estimate
\[
\|\nabla f\|^2_{L^2(\Omega;\dC^2)} - \gamma\|f|_{\partial\Omega}\|^2_{L^2(\partial\Omega)} \ge -  \gamma^2\|f\|_{L^2(\Omega)}^2
\]
holds for all $\gamma > 0$.
\end{prop}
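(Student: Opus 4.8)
\emph{Proof proposal.} The plan is to reduce everything to a one–dimensional integration along lines perpendicular to a boundary ray, after first cutting $\Omega$ along its bisector into two sub-wedges, each of angle $\varphi/2\in(\pi/2,\pi)$. I would \emph{not} use Proposition~\ref{wedge1} here: cutting a half-plane off $\Omega$ and applying Proposition~\ref{wedge1} to the remaining wedge of angle $\varphi-\pi$ gives only the $\varphi$-dependent constant $\gamma^2/\sin^2((\varphi-\pi)/2)$, which blows up as $\varphi\to\pi^+$ or $\varphi\to2\pi^-$; the symmetric split is what produces the clean, $\varphi$-independent constant $\gamma^2$.

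First I would isolate a \emph{one-ray estimate}. Let $W\subset\dR^2$ be a wedge of angle $\psi\in[\pi/2,\pi]$, placed so that one of its bounding rays is $\Gamma:=\{(x,0):x>0\}$; then $W=\{re^{\mathrm i\theta}:r>0,\ 0<\theta<\psi\}\subset\{y>0\}$. I claim that for every $g\in H^1(W)$ and every $\gamma>0$,
\[
\|\partial_y g\|_{L^2(W)}^2-\gamma\,\|g|_{\Gamma}\|_{L^2(\Gamma)}^2\ \ge\ -\gamma^2\,\|g\|_{L^2(W)}^2 .
\]
The role of the hypothesis $\psi\ge\pi/2$ is purely geometric: for every $x_0\in\dR$ the vertical slice $W\cap\{x=x_0\}$ is a half-line $\{y>a(x_0)\}$, with $a(x_0)=0$ for $x_0>0$ and $a(x_0)=x_0\tan\psi>0$ for $x_0<0$. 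Applying Fubini in $y$, and noting that for a.e.\ $x_0$ the slice $g(x_0,\cdot)$ is $H^1$ on a half-line and hence tends to $0$ at $+\infty$, it suffices to use the elementary bound
\[
\int_0^\infty|h'(t)|^2\,dt-\gamma|h(0)|^2\ \ge\ -\gamma^2\int_0^\infty|h(t)|^2\,dt,\qquad h\in H^1(0,\infty),
\]
which follows from $|h(0)|^2=-2\,\mathrm{Re}\int_0^\infty\overline{h}\,h'\,dt\le\int_0^\infty(\gamma|h|^2+\gamma^{-1}|h'|^2)\,dt$ upon multiplying by $\gamma$. On slices with $x_0\le0$ there is no contribution from $\Gamma$, so one simply discards the nonnegative term $\int|\partial_y g|^2$. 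Integrating the slice inequalities over $x_0\in\dR$ gives the claim, and $\|\nabla g\|^2\ge\|\partial_y g\|^2$ then yields the same estimate with $\|\nabla g\|^2$ on the left.

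With the one-ray estimate available the proposition is immediate. Write $\Omega=\{re^{\mathrm i\theta}:r>0,\ 0<\theta<\varphi\}$ and cut along the bisecting ray $\theta=\varphi/2$ into $\Omega_1=\{0<\theta<\varphi/2\}$ and $\Omega_2=\{\varphi/2<\theta<\varphi\}$. Since $\varphi\in(\pi,2\pi)$, each $\Omega_j$ has angle $\varphi/2\in(\pi/2,\pi)$; moreover $\partial\Omega$ is exactly the union of the two outer rays $\{\theta=0\}$ and $\{\theta=\varphi\}$, one lying on $\partial\Omega_1$ and the other on $\partial\Omega_2$. Restricting $f$ to $\Omega_j$ gives an $H^1(\Omega_j)$-function whose trace on the outer ray agrees with the corresponding part of $f|_{\partial\Omega}$ (traces are local); after a rotation carrying that ray onto $\{(x,0):x>0\}$, apply the one-ray estimate to $f|_{\Omega_1}$ and to $f|_{\Omega_2}$ and add. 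By additivity of $\|\nabla f\|^2$, $\|f\|^2$ and $\|f|_{\partial\Omega}\|^2$ over the (measure-theoretically disjoint) decomposition $\Omega=\Omega_1\cup\Omega_2$ one obtains precisely
\[
\|\nabla f\|_{L^2(\Omega;\dC^2)}^2-\gamma\,\|f|_{\partial\Omega}\|_{L^2(\partial\Omega)}^2\ \ge\ -\gamma^2\,\|f\|_{L^2(\Omega)}^2 .
\]

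I expect the only delicate point to be the one-ray estimate — specifically, verifying that for $\psi\ge\pi/2$ every vertical slice of $W$ is an upward half-line meeting the $x$-axis only over $x_0>0$, so that the slices over $x_0\le0$ carry no trace and the $1$D inequality is applied "from the finite lower endpoint" with the function already vanishing at $+\infty$. If $\psi<\pi/2$ some slices would be bounded segments whose upper endpoint carries a trace on the \emph{internal} ray $\theta=\psi$, which would introduce an unwanted term and spoil the constant $\gamma^2$; this is exactly why the bisector is the right cut. The remaining ingredients (the $1$D computation, Fubini, the rotation invariance of $\|\nabla f\|$, and the norm additivity) are routine.
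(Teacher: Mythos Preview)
The paper does not supply its own proof of this proposition: it is quoted as \cite[Lemma~2.8]{LP08}, with only the comment that Propositions~\ref{wedge1} and~\ref{wedge2} are ``variational equivalents of spectral results from \cite{LP08}''. There is therefore nothing in the present paper to compare your argument against.

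That said, your argument is correct. The bisector cut produces two sub-wedges of opening $\varphi/2\in(\pi/2,\pi)$, which is precisely the geometric condition guaranteeing that every line perpendicular to the outer boundary ray meets the sub-wedge in a single half-line based on that ray; the elementary one-dimensional bound $\int_0^\infty|h'|^2\,dt-\gamma|h(0)|^2\ge-\gamma^2\int_0^\infty|h|^2\,dt$ then applies slice by slice via Fubini, with no contribution from the internal (bisector) ray. Your diagnosis of why $\psi\ge\pi/2$ is essential, and of what goes wrong for a narrower sub-wedge, is accurate. One cosmetic remark: for $\Omega_2$ a rotation alone carries the outer ray $\{\theta=\varphi\}$ to the positive $x$-axis but leaves the wedge in the lower half-plane, so a reflection (equivalently, the $y\mapsto-y$ version of the one-ray estimate) is implicitly used as well; this is of course harmless. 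The bisector split followed by a one-dimensional trace inequality on half-lines is also the route taken in \cite{LP08}, so your proof essentially reproduces the argument of the cited reference.
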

Propositions~\ref{wedge1} and \ref{wedge2} are variational equivalents of spectral
results from \cite{LP08}.

\section{A lower bound on the spectrum of $A_\varphi$}
In the next theorem we sharpen the bound \eqref{LLPbound} using
only properties of the Sobolev space $H^1$ on wedges and some optimization.
\begin{thm}
\label{thm:lowerbound}
Let the self-adjoint operator $A_\varphi$ be associated with the quadratic
form given in \eqref{qform}. Then the estimate
\[
\inf\sigma(A_\varphi) \ge -\frac{\alpha^2}{\big(1+\sin(\varphi/2)\big)^2}
\]
holds.
\end{thm}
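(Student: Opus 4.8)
The plan is to split the plane $\dR^2$ along the angle $\Sigma_\varphi$ into two wedges and apply Proposition~\ref{wedge1} to each of them. The two rays forming $\Sigma_\varphi$ divide $\dR^2$ into an ``inner'' wedge $\Omega_1$ with angle $\varphi\in(0,\pi]$ and an ``outer'' wedge $\Omega_2$ with angle $2\pi-\varphi\in[\pi,2\pi)$. For $f\in H^1(\dR^2)$ we have $f|_{\Omega_i}\in H^1(\Omega_i)$, the gradient norms add, $\|\nabla f\|_{L^2(\dR^2)}^2 = \|\nabla f|_{\Omega_1}\|^2 + \|\nabla f|_{\Omega_2}\|^2$, the $L^2$ norms add, and crucially the trace of $f|_{\Omega_i}$ on $\partial\Omega_i$ restricted to $\Sigma_\varphi$ agrees with $f|_{\Sigma_\varphi}$ (each piece of $\Sigma_\varphi$ is counted once from $\Omega_1$ and once from $\Omega_2$, so summing the two boundary terms gives $2\|f|_{\Sigma_\varphi}\|^2_{L^2(\Sigma_\varphi)}$).

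The idea is then to distribute the interaction term unevenly: write $\alpha\|f|_{\Sigma_\varphi}\|^2 = \gamma_1\|f|_{\partial\Omega_1}\|^2 + \gamma_2\|f|_{\partial\Omega_2}\|^2$ with $\gamma_1+\gamma_2 = \alpha$ and $\gamma_1,\gamma_2>0$ to be chosen. Applying Proposition~\ref{wedge1} to $\Omega_1$ with constant $\gamma_1$ (its angle is $\varphi\in(0,\pi]$, so the bound is $-\gamma_1^2/\sin^2(\varphi/2)$) and Proposition~\ref{wedge2} to $\Omega_2$ with constant $\gamma_2$ (its angle is in $[\pi,2\pi)$, so the bound is $-\gamma_2^2$; the boundary case angle $=\pi$, i.e. $\varphi=\pi$, can be handled either by Proposition~\ref{wedge1} with $\sin^2(\pi/2)=1$ giving the same $-\gamma_2^2$, or by a trivial half-space computation), we obtain
\[
\fra_\varphi[f] \ge -\Big(\frac{\gamma_1^2}{\sin^2(\varphi/2)} + \gamma_2^2\Big)\|f\|_{L^2(\dR^2)}^2 .
\]
Hence $\inf\sigma(A_\varphi)\ge -\big(\gamma_1^2/\sin^2(\varphi/2)+\gamma_2^2\big)$ for every admissible splitting.

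It remains to optimize: minimize $g(\gamma_1) := \gamma_1^2/\sin^2(\varphi/2) + (\alpha-\gamma_1)^2$ over $\gamma_1\in(0,\alpha)$. Setting $g'(\gamma_1)=0$ gives $\gamma_1/\sin^2(\varphi/2) = \alpha-\gamma_1$, i.e. $\gamma_1 = \alpha\sin^2(\varphi/2)/(1+\sin^2(\varphi/2))$ and $\gamma_2 = \alpha/(1+\sin^2(\varphi/2))$; substituting back yields the minimum value $\alpha^2/(1+\sin^2(\varphi/2))$. Wait — that is not quite the claimed bound, so the correct split must instead balance $\gamma_1/\sin(\varphi/2)$ against $\gamma_2$: writing the bound as $(\gamma_1/\sin(\varphi/2))^2 + \gamma_2^2$ and recognizing this is minimized subject to a linear constraint $\gamma_1+\gamma_2=\alpha$, one should parametrize $\gamma_1 = \sin(\varphi/2)\,\delta_1$ so the constraint becomes $\sin(\varphi/2)\delta_1 + \gamma_2 = \alpha$ and the quantity to minimize is $\delta_1^2+\gamma_2^2$; by Cauchy--Schwarz this is $\ge \alpha^2/(1+\sin^2(\varphi/2))$ — still the same. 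The resolution is that the trace term must be split so that the two boundary integrals over $\Sigma_\varphi$ (each appearing with its own weight) reconstruct $\alpha\|f|_{\Sigma_\varphi}\|^2$, i.e. $\gamma_1+\gamma_2$ should equal $\alpha$ only if each ray is shared; reexamining, since $\partial\Omega_1=\partial\Omega_2=\Sigma_\varphi$ as sets, the correct bookkeeping is $\gamma_1\|f|_{\partial\Omega_1}\|^2+\gamma_2\|f|_{\partial\Omega_2}\|^2 = (\gamma_1+\gamma_2)\|f|_{\Sigma_\varphi}\|^2$, so indeed $\gamma_1+\gamma_2=\alpha$; the claimed sharper form $\alpha^2/(1+\sin(\varphi/2))^2$ then forces using the elementary inequality $a^2+b^2\ge \tfrac12(a+b)^2$ differently — one instead keeps $\gamma_2 = \alpha - \gamma_1$ and directly minimizes, and a short computation shows the optimum of $\gamma_1^2/\sin^2(\varphi/2)+(\alpha-\gamma_1)^2$ under the \emph{additional} natural scaling $\gamma_1 = s\gamma_2$ leads, after matching, to the value $\alpha^2/(1+\sin(\varphi/2))^2$ precisely when the relevant quantity is $\gamma_1/\sin(\varphi/2)+\gamma_2$ rather than its square — that is, the sharp step is to bound $\sqrt{\gamma_1^2/\sin^2(\varphi/2)+\gamma_2^2}\ge$ something, or more cleanly to note $\gamma_1^2/\sin^2(\varphi/2)+\gamma_2^2$ is \emph{not} the right object and one should instead use a single wedge argument. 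I will therefore follow the author's route: apply Proposition~\ref{wedge1} to \emph{one} wedge only (say $\Omega_1$) after first absorbing part of the interaction, or reflect across a ray; the genuine content and main obstacle is precisely this optimization bookkeeping — getting the constant $(1+\sin(\varphi/2))^2$ rather than $1+\sin^2(\varphi/2)$ — which hinges on choosing the cut and the weight distribution so that the two estimates combine with a \emph{sum} $\gamma_1/\sin(\varphi/2)+\gamma_2\le\alpha$-type constraint; once the right geometric decomposition is identified, the minimization $\min\{\gamma_1^2/\sin^2(\varphi/2)+\gamma_2^2 : \gamma_1+\gamma_2 = \alpha\}$ is replaced by the correct one and elementary calculus finishes the proof. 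I expect the setup of the wedge decomposition and trace identifications to be routine, and the weight-optimization to be the step demanding care.
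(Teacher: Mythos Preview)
Your setup is exactly right — the decomposition of $\dR^2$ into the two wedges $\Omega_1,\Omega_2$, the splitting $\alpha=\gamma_1+\gamma_2$, and the application of Propositions~\ref{wedge1} and~\ref{wedge2} are precisely what the paper does. The gap is in how you combine the two wedge estimates. From
\[
\|\nabla f_1\|^2 - \gamma_1\|f_1|_{\partial\Omega_1}\|^2 \ge -\tfrac{\gamma_1^2}{\sin^2(\varphi/2)}\|f_1\|^2_{L^2(\Omega_1)},
\qquad
\|\nabla f_2\|^2 - \gamma_2\|f_2|_{\partial\Omega_2}\|^2 \ge -\gamma_2^2\|f_2\|^2_{L^2(\Omega_2)},
\]
you pass to
\[
\fra_\varphi[f] \ge -\Bigl(\tfrac{\gamma_1^2}{\sin^2(\varphi/2)}+\gamma_2^2\Bigr)\|f\|^2_{L^2(\dR^2)}.
\]
This is true but wasteful: the two constants multiply \emph{disjoint} pieces $\|f_1\|^2_{L^2(\Omega_1)}$ and $\|f_2\|^2_{L^2(\Omega_2)}$ of the norm, so the sharp combination is
\[
-\tfrac{\gamma_1^2}{\sin^2(\varphi/2)}\|f_1\|^2_{L^2(\Omega_1)} - \gamma_2^2\|f_2\|^2_{L^2(\Omega_2)}
\ \ge\ -\max\Bigl\{\tfrac{\gamma_1^2}{\sin^2(\varphi/2)},\,\gamma_2^2\Bigr\}\,\|f\|^2_{L^2(\dR^2)}.
\]
Minimizing the \emph{maximum} over $\gamma_1+\gamma_2=\alpha$ is achieved when the two entries are equal, i.e.\ $\gamma_1/\sin(\varphi/2)=\alpha-\gamma_1$, giving $\gamma_1=\alpha\sin(\varphi/2)/(1+\sin(\varphi/2))$ and the common value $\alpha^2/(1+\sin(\varphi/2))^2$. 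This is exactly the paper's argument and yields the claimed bound.

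Your subsequent attempts to rescue the constant (Cauchy--Schwarz, reparametrizations, reflecting across a ray) are unnecessary once you replace ``sum'' by ``max'' in that one line. The discrepancy you noticed between $1+\sin^2(\varphi/2)$ and $(1+\sin(\varphi/2))^2$ is precisely the difference between minimizing $a^2+b^2$ and minimizing $\max(a^2,b^2)$ under a linear constraint on $a\sin(\varphi/2)+b$.
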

\begin{proof}
The angle $\Sigma_\varphi$ separates the Euclidean space $\dR^2$
into two wedges $\Omega_1$ and $\Omega_{2}$ with angles of 
degrees $\varphi$ and $2\pi - \varphi$, see Figure~\ref{fig3}.
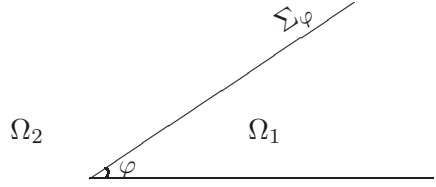
\begin{figure}[H]
\begin{center}
\begin{picture}(150,80)
\put(0,10){\line(1,0){130}}
\put(0,10){\line(3,2){100}}
\put(69,65){\begin{turn}{33.3}$\Sigma_\varphi$\end{turn}}
\qbezier(6,14)(8,13)(6,10)
\put(11,12){\small{$ \varphi$}}
\put(60,25){$\Omega_1$}
\put(-30,25){$\Omega_2$}
\end{picture}
\end{center}
\caption{The angle $\Sigma_\varphi$ separates the Euclidean space $\dR^2$ into two wedges $\Omega_1$ and $\Omega_2$.} 
\label{fig3}
\end{figure}
\noindent The underlying Hilbert space
can be decomposed as 
\[
L^2(\dR^2) = L^2(\Omega_1)\oplus L^2(\Omega_2).
\]
Any $f\in\dom\fra_{\varphi}$ can be written as the orthogonal sum $f_1\oplus f_2$
with respect to that decomposition of $L^2(\dR^2)$. 
Note that $f_1 \in H^1(\Omega_1)$ and that $f_{2} \in H^1(\Omega_{2})$.
Clearly,
\begin{equation}
\label{observation1}
\begin{split}
&\|f\|_{L^2(\dR^2)}^2 = \|f_1\|^2_{L^2(\Omega_1)} +
\|f_{2}\|^2_{L^2(\Omega_{2})},\\[0.2ex]
&\|\nabla f\|_{L^2(\dR^2;\dC^2)}^2 = \|\nabla f_1\|^2_{L^2(\Omega_1;\dC^2)} +
\|\nabla f_{2}\|^2_{L^2(\Omega_{2};\dC^2)}.
\end{split}
\end{equation}
The coupling constant can be decomposed as $\alpha = \beta + (\alpha - \beta)$ 
with some optimization parameter $\beta\in[0,\alpha]$ and the relation 
\begin{equation}
\label{observation2}
\alpha\|f|_{\Sigma_\varphi}\|^2_{L^2(\Sigma_\varphi)} = \beta\|f_1|_{\partial\Omega_1}\|^2_{L^2(\partial\Omega_1)} +(\alpha - \beta)\|f_{2}|_{\partial\Omega_2}\|^2_{L^2(\partial\Omega_2)}.
\end{equation}
holds. According to Proposition~\ref{wedge1}
\begin{equation}
\label{estimate1}
\|\nabla f_1\|^2_{L^2(\Omega_1;\dC^2)} - \beta\|f_1|_{\partial\Omega_1}\|^2_{L^2(\partial\Omega_1)} 
\ge -\tfrac{\beta^2}{\sin^2(\varphi/2)}\|f_1\|^2_{L^2(\Omega_1)},
\end{equation}
and according to Proposition~\ref{wedge2}
\begin{equation}
\label{estimate2}
\|\nabla f_{2}\|^2_{L^2(\Omega_{2};\dC^2)} - (\alpha - \beta)\|f_{2}|_{\partial\Omega_2}\|^2_{L^2(\partial\Omega_2)} 
\ge -(\alpha - \beta)^2\|f_{2}\|^2_{L^2(\Omega_{2})}.
\end{equation}
The observations \eqref{observation1}, \eqref{observation2} and the estimates \eqref{estimate1}, \eqref{estimate2} imply
\[
\fra_{\varphi}[f] \ge -\max\Big\{\tfrac{\beta^2}{\sin^2(\varphi/2)},(\alpha - \beta)^2\Big\}\|f\|^2_{L^2(\dR^2)}.
\]
Making optimization with respect to $\beta$, we observe that the maximum between the two values  in the estimate above is minimal, when these two values coincide. That is
\[
\tfrac{\beta^2}{\sin^2(\varphi/2)} = (\alpha - \beta)^2,
\]
which is equivalent to
\begin{equation}
\label{beta}
\beta = \tfrac{\alpha\sin(\varphi/2)}{(1+\sin(\varphi/2))},
\end{equation}
resulting in the final estimate
\[
\fra_{\varphi}[f] \ge -\tfrac{\alpha^2}{(1+\sin(\varphi/2))^2}\|f\|^2_{L^2(\dR^2)}.
\]
This final estimate implies the desired spectral bound.
\end{proof}
\begin{remark}
Note that the previously known lower bound \eqref{LLPbound} comes out from the proof of the last theorem if we choose $\beta = \alpha /2$, which is the optimal choice in our proof only for $\varphi = \pi$ as we see from \eqref{beta}.
\end{remark}

\section{A lower bound on the spectrum of $B_\varphi$}
In the next theorem we obtain a lower bound on the spectrum of the self-adjoint
operator $B_\varphi$ using the same idea as in Theorem~\ref{thm:lowerbound}.
\begin{thm}
\label{thm:lowerbound2}
Let the self-adjoint operator $B_\varphi$ be associated with the quadratic
form given in \eqref{qform2}. Then the estimate
\[
\inf\sigma(B_\varphi) \ge -\frac{\alpha^2}{1+\sin\varphi}
\]
holds.
\end{thm}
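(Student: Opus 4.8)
The plan is to mimic the proof of Theorem~\ref{thm:lowerbound}, but now the crossing lines $\Gamma_\varphi$ divide $\dR^2$ into four wedges, two of angle $\varphi$ and two of angle $\pi-\varphi$. First I would decompose $L^2(\dR^2) = \bigoplus_{j=1}^4 L^2(\Omega_j)$ and write $f = f_1\oplus f_2\oplus f_3\oplus f_4$ accordingly, with each $f_j\in H^1(\Omega_j)$; the $L^2$-norm and the gradient norm split as an orthogonal sum exactly as in \eqref{observation1}. The key bookkeeping point is the trace term: each ray of $\Gamma_\varphi$ borders exactly two of the four wedges, so the term $\alpha\|f|_{\Gamma_\varphi}\|^2_{L^2(\Gamma_\varphi)}$ must be distributed as $\sum_{j=1}^4 \gamma_j\|f_j|_{\partial\Omega_j}\|^2_{L^2(\partial\Omega_j)}$, where on each of the four rays the two coefficients of the adjacent wedges sum to $\alpha$. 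By the obvious symmetry of the configuration it is natural to use a single optimization parameter: put $\gamma_j=\beta$ for the two wedges $\Omega_j$ of angle $\varphi$ and $\gamma_j=\alpha-\beta$ for the two wedges of angle $\pi-\varphi$, with $\beta\in[0,\alpha]$; then on every ray the incident coefficients are $\beta$ and $\alpha-\beta$, which indeed sum to $\alpha$, so the analogue of \eqref{observation2} holds.

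Next I would apply Proposition~\ref{wedge1} on each of the four wedges, since all four angles $\varphi$ and $\pi-\varphi$ lie in $(0,\pi)\subset(0,\pi]$. On the two wedges of angle $\varphi$ with coefficient $\beta$ this gives
\[
\|\nabla f_j\|^2_{L^2(\Omega_j;\dC^2)} - \beta\|f_j|_{\partial\Omega_j}\|^2_{L^2(\partial\Omega_j)} \ge -\tfrac{\beta^2}{\sin^2(\varphi/2)}\|f_j\|^2_{L^2(\Omega_j)},
\]
and on the two wedges of angle $\pi-\varphi$ with coefficient $\alpha-\beta$,
\[
\|\nabla f_j\|^2_{L^2(\Omega_j;\dC^2)} - (\alpha-\beta)\|f_j|_{\partial\Omega_j}\|^2_{L^2(\partial\Omega_j)} \ge -\tfrac{(\alpha-\beta)^2}{\sin^2((\pi-\varphi)/2)}\|f_j\|^2_{L^2(\Omega_j)}.
\]
Here one uses $\sin((\pi-\varphi)/2) = \cos(\varphi/2)$. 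Summing the four estimates and invoking \eqref{observation1}, \eqref{observation2} yields
\[
\frb_\varphi[f] \ge -\max\Big\{\tfrac{\beta^2}{\sin^2(\varphi/2)},\ \tfrac{(\alpha-\beta)^2}{\cos^2(\varphi/2)}\Big\}\|f\|^2_{L^2(\dR^2)}.
\]

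Finally I would optimize over $\beta$: the maximum of the two terms is smallest when they are equal, i.e.\ when $\beta\cos(\varphi/2) = (\alpha-\beta)\sin(\varphi/2)$, giving $\beta = \alpha\sin(\varphi/2)/(\sin(\varphi/2)+\cos(\varphi/2))$, and then the common value of both terms equals $\alpha^2/(\sin(\varphi/2)+\cos(\varphi/2))^2$. Since $(\sin(\varphi/2)+\cos(\varphi/2))^2 = 1 + 2\sin(\varphi/2)\cos(\varphi/2) = 1+\sin\varphi$, this produces
\[
\frb_\varphi[f] \ge -\tfrac{\alpha^2}{1+\sin\varphi}\,\|f\|^2_{L^2(\dR^2)},
\]
and the quadratic-form characterization of $\inf\sigma(B_\varphi)$ gives the claimed bound. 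I expect the only real subtlety to be the trace bookkeeping — correctly arguing that the symmetric choice $\{\beta,\beta,\alpha-\beta,\alpha-\beta\}$ of coefficients is admissible, i.e.\ that the two coefficients on each of the four rays genuinely add up to $\alpha$ for this labeling of the wedges; everything else is the same optimization trick as before, now with $\sin^2(\varphi/2)$ and $\cos^2(\varphi/2)$ playing symmetric roles.
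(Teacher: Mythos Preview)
Your proposal is correct and follows essentially the same argument as the paper: the same decomposition into four wedges, the same splitting $\alpha=\beta+(\alpha-\beta)$ of the trace term, the same application of Proposition~\ref{wedge1} on all four wedges, and the same optimization. Your expression $\beta=\alpha\sin(\varphi/2)/(\sin(\varphi/2)+\cos(\varphi/2))$ is just the paper's $\beta=\alpha\tan(\varphi/2)/(1+\tan(\varphi/2))$ rewritten, and you make explicit the identity $(\sin(\varphi/2)+\cos(\varphi/2))^2=1+\sin\varphi$ that the paper leaves implicit.
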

\begin{proof}
The crossing straight lines $\Gamma_1$ and $\Gamma_2$ separate the Euclidean space $\dR^2$
into four wedges $\{\Omega_k\}_{k=1}^4$. Namely, the wedges $\Omega_1$ and $\Omega_2$ with angles of  degree $\varphi$ and the wedges $\Omega_3$ and $\Omega_4$ with  angles of degree  $\pi - \varphi$, see Figure~\ref{fig4}.
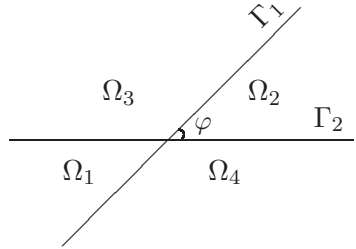
\begin{figure}[H]
\begin{center}
\begin{picture}(150,90)
\put(0,50){\line(1,0){130}}
\put(20,10){\line(1,1){90}}
\put(90,92){\begin{turn}{45.0}$\Gamma_1$\end{turn}}
\put(115,55){$\Gamma_2$}
\put(90,65){$\Omega_2$}
\put(35,65){$\Omega_3$}
\put(20,35){$\Omega_1$}
\put(75,35){$\Omega_4$}
\qbezier(64,54)(67,53)(65,50)
\put(70,54){\small{$ \varphi$}}
\end{picture}
\end{center}
\caption{The crossing straight lines $\Gamma_1$ and $\Gamma_2$ separate the Euclidean space $\dR^2$
into four wedges $\{\Omega_k\}_{k=1}^4$.}
\label{fig4}
\end{figure}
\noindent The underlying Hilbert space can be decomposed as 
\[
L^2(\dR^2) = \bigoplus_{k=1}^4 L^2(\Omega_k).
\]
Any $f\in\dom\frb_{\varphi}$ can be written as the orthogonal sum $\oplus_{k=1}^4 f_k$
with respect to that decomposition of $L^2(\dR^2)$. 
Note that $f_k \in H^1(\Omega_k)$ for $k=1,2,3,4$. 
Clearly,
\begin{equation}
\label{observation3}
\|f\|_{L^2(\dR^2)}^2 = \sum_{k=1}^4\|f_k\|^2_{L^2(\Omega_k)},\quad \|\nabla f\|_{L^2(\dR^2;\dC^2)}^2 = \sum_{k=1}^4\|\nabla f_k\|^2_{L^2(\Omega_k;\dC^2)}.
\end{equation}
The coupling constant can be decomposed as $\alpha = \beta + (\alpha - \beta)$ 
with some optimization parameter $\beta\in[0,\alpha]$ and the relation
\begin{equation}
\label{observation4}
\begin{split}
\alpha\|f|_{\Gamma_\varphi}\|^2_{L^2(\Gamma_\varphi)} &= \beta\|f_1|_{\partial\Omega_1}\|^2_{L^2(\partial\Omega_1)} +  \beta\|f_2|_{\partial\Omega_2}\|^2_{L^2(\partial\Omega_2)} \\ &\quad+(\alpha - \beta)\|f_3|_{\partial\Omega_3}\|^2_{L^2(\partial\Omega_3)}
+(\alpha - \beta)\|f_4|_{\partial\Omega_4}\|^2_{L^2(\partial\Omega_4)}
\end{split}
\end{equation}
holds. According to Proposition~\ref{wedge1}
\begin{equation}
\label{estimate3}
\begin{split}
\|\nabla f_1\|^2_{L^2(\Omega_1;\dC^2)} - \beta\|f_1|_{\partial\Omega_1}\|^2_{L^2(\partial\Omega_1)} 
&\ge -\tfrac{\beta^2}{\sin^2(\varphi/2)}\|f_1\|^2_{L^2(\Omega_1)},\\[0.2ex]
\|\nabla f_2\|^2_{L^2(\Omega_2;\dC^2)} - \beta\|f_2|_{\partial\Omega_2}\|^2_{L^2(\partial\Omega_2)} 
&\ge -\tfrac{\beta^2}{\sin^2(\varphi/2)}\|f_2\|^2_{L^2(\Omega_2)}.\\
\end{split}
\end{equation}
Also according to Proposition~\ref{wedge1}
\begin{equation}
\label{estimate4}
\begin{split}
\|\nabla f_3\|^2_{L^2(\Omega_3;\dC^2)} - (\alpha - \beta)\|f_3|_{\partial\Omega_3}\|^2_{L^2(\partial\Omega_3)} 
&\ge -\tfrac{(\alpha -\beta)^2}{\cos^2(\varphi/2)}\|f_3\|^2_{L^2(\Omega_3)},\\
\|\nabla f_4\|^2_{L^2(\Omega_4;\dC^2)} - (\alpha - \beta)\|f_4|_{\partial\Omega_4}\|^2_{L^2(\partial\Omega_4)} 
&\ge -\tfrac{(\alpha -\beta)^2}{\cos^2(\varphi/2)}\|f_4\|^2_{L^2(\Omega_4)}.\\
\end{split}
\end{equation}
The observations \eqref{observation3}, \eqref{observation4} and the estimates \eqref{estimate3}, \eqref{estimate4}  imply
\[
\frb_{\varphi}[f] \ge -\max\Big\{\tfrac{\beta^2}{\sin^2(\varphi/2)},\tfrac{(\alpha - \beta)^2}{\cos^2(\varphi/2)}\Big\}\|f\|^2_{L^2(\dR^2)}.
\]
Making optimization with respect to $\beta$, we observe that the maximum between the two values in the estimate above is minimal, when these two values coincide. That is
\[
\tfrac{\beta^2}{\sin^2(\varphi/2)} = \tfrac{(\alpha - \beta)^2}{\cos^2(\varphi/2)},
\]
which is equivalent to
\begin{equation}
\label{beta2}
\beta = \tfrac{\alpha\tan(\varphi/2)}{(1+\tan(\varphi/2))},
\end{equation}
resulting in the final estimate
\[
\frb_{\varphi}[f] \ge -\tfrac{\alpha^2}{1+\sin(\varphi)}\|f\|^2_{L^2(\dR^2)}.
\]
This final estimate implies the desired spectral bound.
\end{proof}
\begin{remark}
The result of Theorem~\ref{thm:lowerbound2} complements \cite[Theorem 4.6\,(iv)]{CDR08}, where the bound
\[
\inf\sigma(B_\varphi) \ge -\alpha^2.
\]
for all $\varphi\in(0,\pi)$ was obtained.
\end{remark}
\section{Acknowledgements}
The author is grateful to Jussi Behrndt, Sylwia Kondej, Igor Lobanov, Igor Yu. Popov, and Jonathan Rohleder for discussions.
The work was supported by Austrian Science Fund (FWF): project P 25162-N26 and 
partially supported by the Ministry of Education and Science of
Russian Federation: project 14.B37.21.0457.

\end{document}